\documentclass[12pt]{iopart}
\usepackage[utf8]{inputenc}
\usepackage[T1]{fontenc}
\usepackage[
 backend=biber,
 style=numeric,
 sorting=none
 ]{biblatex}
 \AtEveryBibitem{\clearfield{pages}}
\AtEveryBibitem{\clearfield{volume}}
\AtEveryBibitem{\clearfield{number}}
\AtEveryBibitem{\clearfield{url}}
\AtEveryBibitem{\clearfield{issn}}
\AtEveryBibitem{\clearfield{isbn}}
\AtEveryBibitem{\clearfield{publisher}}
\AtEveryBibitem{\clearfield{note}}
\AtEveryBibitem{\ifentrytype{article}
    {\clearfield{eprint}}
    {}}
\makeatletter
\newcommand{\mainmatter}{%
  \setcounter{footnote}{0}%
  \patchcmd{\@makefntext}{\fnsymbol}{\arabic}{}{}%
  \patchcmd{\@thefnmark}{\fnsymbol}{\arabic}{}{}%
  \def\@makefnmark{\textsuperscript{\arabic{footnote}}}%
}
\makeatother
\DeclareFieldFormat[misc]{title}{``#1''} 
\usepackage{iopams}
\usepackage{capt-of}
\expandafter\let\csname equation*\endcsname\relax

\expandafter\let\csname endequation*\endcsname\relax
\usepackage{hyperref}
\usepackage{amsmath}
\usepackage{dcolumn}
\usepackage{bm}
\usepackage{algorithm}
\usepackage{algpseudocode}
\newcommand{\tinyspace}{\mspace{1mu}}

\newcommand{\triplenorm}[1]{
  \left|\!\tinyspace\left|\!\tinyspace\left| #1 
  \right|\!\tinyspace\right|\!\tinyspace\right|}
\makeatletter
\renewcommand{\ALG@name}{Protocol}
\makeatother
 \usepackage[nice]{nicefrac}
 \usepackage{tikz}
 \usepackage{mathrsfs}
 \usetikzlibrary{math}
 \usetikzlibrary{quantikz}
 \usetikzlibrary{positioning}
 \newtheorem{theorem}{Theorem}
\newtheorem{definition}{Definition}
\newtheorem{lemma}{Lemma}

\newenvironment{proof}{\begin{em}Proof:~}{\end{em}}
\newcommand{\qed}{\ensuremath{\hfill\blacksquare}}

\newcommand{\mand}{\quad\text{and}\quad}
\def\D{^{\dagger}}

\def\Cc{\mathbb{C}}

\def\id{\mathbbm{1}}
\def\R{\mathcal{R}}
\def\S{\mathcal{S}}
\def\P{\mathcal{P}}

\def\U{\mathcal{U}}
\def\X{\mathcal{X}}
\def\Y{\mathcal{Y}}

\def\Qsf{\mathsf{Q}}
\def\Ksf{\mathsf{K}}
\def\Rsf{\mathsf{R}}

\def\nnn{\nonumber \\}
\usepackage{physics}
\usepackage{bbm}
\usepackage{bm}
\DeclareMathOperator{\unitary}{U}
\DeclareMathOperator{\channel}{C}
\DeclareMathOperator{\density}{D}
\DeclareMathOperator{\sphere}{S}
\DeclareMathOperator{\pos}{Pos}

\newcommand{\proref}[1]{Protocol \ref{#1}}
\newcommand{\figref}[1]{Figure \ref{#1}}
\newcommand{\thmref}[1]{Theorem \ref{#1}}
\newcommand{\defref}[1]{Definition \ref{#1}}
\newcommand{\lemref}[1]{Lemma \ref{#1}}
\newcommand{\notimes}{\bigotimes_{\substack{i=1\\i\neq\ell}}^{n+1}}
\newcommand{\noprod}{\prod_{\substack{i=1\\i\neq \ell}}^{n+1}}
\begin{document}

\title[Verified delegated quantum computation requires techniques beyond cut-and-choose]{Verified delegated quantum computation requires techniques beyond cut-and-choose}

\author{Fabian Wiesner\textsuperscript{1,$\dagger$} and Anna Pappa\textsuperscript{1}\\[1em]
\footnotesize\textsuperscript{1}{%
Electrical Engineering and Computer Science Department,\\Technische Universit\"at Berlin, 10587 Berlin, Germany
}}%
\ead{\textsuperscript{$\dagger$}f.wiesner@tu-berlin.de}
\vspace{10pt}
\setlength{\mathindent}{0pt}

\begin{abstract}
    Delegated quantum computation enables a client with limited quantum capabilities to outsource computations to a more powerful quantum server while preserving correctness and privacy. Verification is crucial in this setting to ensure that the untrusted quantum server performs the computation honestly and returns correct results. A common verification method is the quantum cut-and-choose technique. Inspired by classical verification methods for two-party computation, the client uses the majority of the delegated rounds to test the server's honesty, while keeping the remaining ones for the actual computation. Combining this technique with other methods, such as quantum error correction, could help achieve negligible cheating probabilities for the server; however, such methods can impose significant overheads making implementations unfeasible for the near-term future. In this work, we investigate whether cut-and-choose can yield efficient and secure verifiable quantum computation without additional costly techniques. We find that verifiable delegated quantum computation protocols relying solely on cut-and-choose techniques cannot be secure and efficient at the same time. 
\end{abstract}
\section{\label{sec:Introduction}Introduction}\mainmatter
Several technological advances towards usable quantum computers have been made recently. Due to the expected unprecedented speedup for relevant computational problems, such as factoring and simulating chemical reactions, quantum computers are appealing to scientific institutions and companies alike. However, compared to classical computers, current and near-term implementations of quantum computers are expensive, massive in size, and require considerable effort to operate and maintain. Indeed, there is no indication that these drawbacks will be alleviated in the foreseeable future, if ever, and that non-specialized organizations will be able to operate quantum computers. One solution to this problem is cloud access via the (quantum) Internet. Although companies building quantum computers have already implemented this solution, it comes with significant downsides. If a client requires access to a quantum computer, beyond mere testing or playful exploration, it is reasonable to assume that the delegated computation is non-trivial. This implies that the algorithm, as well as the input and output data, may be of significant value. Consequently, if the service provider (server) is untrusted, the client cannot risk the disclosure of this sensitive information.

\emph{Blindness} is the property that ensures that no information about the computation, the input, or the output is leaked. Recent explorations \cite{Badertscher_2020} provide evidence that achieving blindness for purely classical clients is unlikely without additional assumptions, such as the quantum hardness of the Learning with Errors (LWE) problem. Remarkably, granting the client only minimal quantum capabilities, such as preparing or measuring single qubits and performing Pauli operations, is sufficient to obtain perfect, information-theoretic blindness \cite{BFK09, MF13}.

Beyond preserving privacy, the client must also be assured that the computation is performed correctly; therefore, the protocol should provide \emph{verifiability}. A widely used approach is for the client to assess the server’s honesty by delegating multiple computations: in certain rounds (test rounds), the client assigns easily verifiable tasks, while in the remaining rounds (computation rounds), the actual target computation is delegated. If a high fraction of the outputs of the test rounds are correct, the client can use a majority vote on the outputs of the computation rounds to guaranty that the server did not cheat (except with a negligible probability). The described \emph{cut-and-choose technique}, i.e., intertwining verifiable test rounds with computation rounds, has been used and studied extensively in classical cryptography, e.g. in \cite{rabin1978digitalized,brassard1988minimum,Crepeau2005,CC}. It therefore appeared very promising at first for verifying blind quantum computations as well, and has been used in many previous studies \cite{kapourniotis2023asymmetric, Leichtle,QC_CC,HowToVerif}. However, verification of quantum computation, especially with a potentially malicious server, is challenging if the output is a quantum state. In fact, all previous protocols with quantum output that achieve verifiability with negligible cheating probability, albeit not in the number of rounds but in other parameters such as the code distance, require additional techniques such as error correction \cite{Kashefi_2017_Petros, Mor14, Fitzsimons_2017}. Moreover, a recent work \cite{UnifVDQC_EC} establishes that error correction is, in fact, necessary for verifiable delegated quantum computation based on trap techniques in measurement-based quantum computing. These observations naturally lead to the question whether the cut-and-choose technique alone can suffice for verifying delegated quantum computation or whether more resource-intensive methods are fundamentally unavoidable.

We demonstrate that the latter holds: we prove that every protocol that solely relies on the cut-and-choose technique is affected by a fundamental trade-off between efficiency, correctness, and security. We show that a) for a fidelity-based stand-alone and b) for a generic composable security definition, if a protocol utilizes the cut-and-choose technique with on average $N$ test rounds, the following holds (informally):

\begin{quote}
If $\varepsilon_H$ is the probability of wrongly rejecting the output state, and $\varepsilon_D$ is the deviation (measured with the fidelity or trace distance) of the (malicious) server from an honest behavior that is not noticed by the client, then the sum of the two is lower bounded by a non-negligible function in $N$. 
\end{quote}

We specifically construct an attack that results in these lower bounds for $\varepsilon_H+\varepsilon_D$ for composable and stand-alone security.
Importantly, this attack is separable, independent and identically distributed (i.i.d.) and does not require a predetermined round number or knowledge of the server about the distribution that governs the round number. Hence, our work surpasses intuitive arguments about the limitations of the cut-and-choose technique and provides a formal no-go result for a broad class of protocols.

\section{Background}\label{sec:Results}
In this section, we present the required background and describe the general class of protocols under consideration. We assume a protocol with a probabilistic total of $n+1$ rounds,  of which $n$ are devoted to verification\footnote{Note that there is no loss of generality in assuming only a single state for the output, as outputting more states can only decrease the fidelity if the server is dishonest.}. We assume that the client and the server use a (blind) delegated quantum computation (DQC) subroutine, which allows the client to delegate computations to the server. We consider:
        \begin{itemize}
            \item a probability distribution $\Omega:\mathbb{N}\rightarrow[0,1]$, that the client uses to sample the number of test rounds $n$,
            \item $k$ qubit number of the input and output of the computation,
            \item a family of trap-generators $\{\Theta_{k,n}\}_{k\in\mathbb{N}_+}^{n\in\mathbb{N}}$, each returning for a given round $i$, a test computation and input, i.e., a unitary on $T_i\in\unitary(\Cc^{2^k})$ and a pure state $\chi_i\in \sphere(\Cc^{2^k})$,\footnote{We lift the implied restrictions in the appendix.} 
            \item a family of (positive semidefinite) measurement operators\\ $\left\{\mu_{k,n}(0)\in \pos\left(\Cc^{2^{kn}}\right)\right\}_{k\in\mathbb{N}_+,n\in\mathbb{N}}$ that the client uses to decide whether to accept or reject the server's behavior. 
    \end{itemize}

    \begin{algorithm}[h]
        \caption{The client inputs a state $\psi_C\in \mathcal{D}(\Cc^{2^k})$ and a unitary $U\in\unitary(\Cc^{2^k})$ and receives as output the state $\rho_c$.
        }\label{proto}
        \begin{algorithmic}[1]
            \Statex
            \State The client samples $n\gets_{\Omega}\mathbb{N}$.
            \State The client samples randomly the round that is used for output $\ell\gets_{\$}\{1,...,n+1\}$.
            \For{$i=1,...,n+1$}
                \If{$i\neq \ell$}
                    \State The client delegates a trap-computation $T_i,\chi_i\gets \Theta_{k,n}(i)$ and saves the output of the computation.
                \Else
                    \State The client delegates $U$ on the input $\psi_C$ and saves the output as $\phi_{\ell}$.
                \EndIf
            \EndFor
            \State The client measures the outputs of the trap-computations using $\{\mu_{k,n}(0),\id_{\Cc^{2^k}}^{\otimes n}-\mu_{k,n}(0)\}$. The output is $r$.
            \If{$r=0$}
                \State $\rho_c\gets \phi_\ell$
            \Else
                \State $\rho_c\gets \bot$
            \EndIf
        \end{algorithmic}
    \end{algorithm}

    We prove that cut-and-choose verifiable delegated quantum computing (VDQC) with quantum output cannot be secure and correct, and at the same time efficient, without additional correction or detection techniques, such as error-detection codes. 
Since we prove a no-go result, it suffices to present a successful malicious strategy for one pair of input and computation. 

Let's consider the state $\psi = \ketbra{+}^{\otimes k}$ and the computation $U=\id_2^{\otimes k}$ for an arbitrary $k>0$. 
The server's attack is fairly simple; the server acts honestly in every round, except for applying the operation:
$$P(\alpha) = \left(\begin{matrix}
            1 & 0 \\
           0 & e^{i\alpha}
        \end{matrix}\right)$$ to the $k^{th}$ qubit either before or after the delegated unitary -- note that all currently known implementations of DQC offer at least one of these two types of access to the server. For a given number of test rounds $n$ and output round number $\ell$, the probability that the client accepts when the server performs this attack is:
\begin{align*}
    p_{(n,\ell)}^D = \braket{\mu_{k,n}(0)}{\notimes T^{\alpha}_i\left(\ketbra{\chi_i}\right)}.
\end{align*}
with either $T_i^{\alpha} = T_i\circ (\id_2^{\otimes k-1}\otimes P(\alpha))$ or $T_i^{\alpha} = (\id_2^{\otimes k-1}\otimes P(\alpha))\circ T_i$, depending on the order that $P(\alpha)$ and the delegated unitary were applied. \\

Similarly, for a given $n$ and $\ell$, the acceptance probability $p_{(n,\ell)}^H$ when the server is honest is:
\begin{align*}
    p_{(n,\ell)}^H = \braket{\mu_{k,n}(0)}{\notimes T_i\left(\ketbra{\chi_i}\right)}.
\end{align*}
Hence, the overall probabilities for acceptance if the server is honest ($p_H$), and respectively dishonest ($p_D$), are given by:
\begin{align*}
    p_H = \sum_{n=0}^{\infty}\frac{\Omega(n)}{n+1}\sum_{\ell=1}^{n+1}p_{(n,\ell)}^H, \quad p_D = \sum_{n=0}^{\infty}\frac{\Omega(n)}{n+1}\sum_{\ell=1}^{n+1}p_{(n,\ell)}^D.
\end{align*}
We can derive an upper bound for $|p_H-p_D|$ by using the triangle inequality:
\begin{align}\label{eq:phminpd1}
    |p_H-p_D| &=\left|\sum_{n=0}^{\infty}\frac{\Omega(n)}{n+1}\sum_{\ell=1}^{n+1}\left(p_{(n,\ell)}^H - p_{(n,\ell)}^D\right)\right|
    \leq\sum_{n=0}^{\infty}\frac{\Omega(n)}{n+1}\sum_{\ell=1}^{n+1}\left|p_{(n,\ell)}^H - p_{(n,\ell)}^D\right|.
\end{align}
We can further upper-bound the individual terms by using the fact that the trace distance of $n$-fold tensor products $\notimes T_i(\ketbra{\chi_i})$ and $\notimes T_i^{\alpha}(\ketbra{\chi_i})$ of pure states is given by 
\begin{align*}
    &\left\|\notimes T_i(\ketbra{\chi_i})-\notimes T_i^\alpha(\ketbra{\chi_i})\right\|_1 =2\sqrt{1-\noprod\left|\bra{\chi_i}T_i\D T_i^{\alpha}\ket{\chi_i}\right|^2}.
\end{align*}
Using the Holevo-Helstrom theorem \cite{watrous_2018}, which states that the distinguishability between two states is upper bounded by half the trace distance, we find: 
\begin{align}\label{eq:phminpd2}
    &|p_{(n,\ell)}^H-p_{(n,\ell)}^D|\leq \sqrt{1-\noprod\left|\bra{\chi_i}T_i\D T_i^{\alpha}\ket{\chi_i}\right|^2}.
\end{align}
If the attack is applied before the delegated unitary, we find
\begin{align*}
    \left|\bra{\chi_i}T_i\D T_i^{\alpha}\ket{\chi_i}\right|^2 = \left|\bra{\chi_i}T_i\D T_i \left(\id_2^{\otimes k-1}\otimes P(\alpha)\right)\ket{\chi_i}\right|^2 = \left|\bra{\chi_i} \left(\id_2^{\otimes k-1}\otimes P(\alpha)\right)\ket{\chi_i}\right|^2.
\end{align*}
If, however, the attack happened after the delegated unitary, we find
\begin{align*}
    \left|\bra{\chi_i}T_i\D T_i^{\alpha}\ket{\chi_i}\right|^2 = \left|\bra{\chi_i}T_i\D \left(\id_2^{\otimes k-1}\otimes P(\alpha)\right)T_i\ket{\chi_i}\right|^2.
\end{align*}
So either way it holds
\begin{align*}
    \left|\bra{\chi_i}T_i\D T_i^{\alpha}\ket{\chi_i}\right|^2 \geq \min_{\ket{u}\in \sphere(\Cc^{2^k})}\left|\bra{u}\left(\id_2^{\otimes k-1}\otimes P(\alpha)\right)\ket{u}\right|^2.
\end{align*}
The right-hand side of this inequality is the minimum of the squared absolute values of the elements in the numerical range of $\id_2^{\otimes k-1}\otimes P(\alpha)$. Since this operator has only two different eigenvalues, $1$ and $e^{i\alpha}$, and the numerical range is convex by the Toeplitz–Hausdorff theorem \cite{watrous_2018} we can write
\begin{align*}
    \left|\bra{\chi_i}T_i\D T_i^{\alpha}\ket{\chi_i}\right|^2 \geq\min_{\lambda\in[0,1]}\left|\lambda + (1-\lambda)e^{i\alpha}\right|^2 = \min_{\lambda\in[0,1]}\lambda^2+(1-\lambda)^2+2\lambda(1-\lambda)\cos(\alpha).
\end{align*}
The minimum is obtained at $\lambda = \nicefrac{1}{2}$ which implies
\begin{align*}
    \left|\bra{\chi_i}T_i\D T_i^{\alpha}\ket{\chi_i}\right|^2 &\geq \cos(\frac{\alpha}{2})^2
\end{align*}
Inserted in \eqref{eq:phminpd2} we find
\begin{align*}
    &|p_{(n,\ell)}^H-p_{(n,\ell)}^D|\leq\sqrt{1-\cos(\frac{\alpha}{2})^{2n}},
\end{align*}
which implies for \eqref{eq:phminpd1} 
\begin{align*}
    |p_H-p_D|&\leq\sum_{n=0}^{\infty}\frac{\Omega(n)}{n+1}\sum_{\ell=1}^{n+1}\sqrt{1-\cos(\frac{\alpha}{2})^{2n}} \\&= \sum_{n=0}^{\infty}\Omega(n)\sqrt{1-\cos(\frac{\alpha}{2})^{2n}}
\end{align*}
The function $\sqrt{1-\cos(\nicefrac{\alpha}{2})^{2n}}$ is concave in $n$ (since $0\leq \cos(\nicefrac{\alpha}{2})^{2} \leq 1$), and we can therefore utilize Jensen's inequality and find:
\begin{align}\label{eq:prep_end}
    &|p_H-p_D|\leq \sqrt{1-\cos(\frac{\alpha}{2})^{2N}}=\sqrt{1-\left(1-\sin(\frac{\alpha}{2})^{2}\right)^N}.
\end{align}
where $N$ is the expected number of test rounds. With this preparation, we present our main results in the next section.

\section{Security}
We are now ready to examine the security aspects of the type of protocols presented above. We will first define stand-alone security and prove a trade-off between efficiency and security. We will also prove a similar trade-off for generic composable security definitions by providing a bound for trace distances that are crucial in all composable security frameworks.   

\subsection{Stand-alone security}
    We first define fidelity-based stand-alone security inspired by a similar security definition for quantum state verification used in \cite{colisson2024graphstateverificationprotocols}. The protocol of the client (respectively server) is denoted as $\pi_C$ (respectively  $\pi_S$).
    \begin{definition}\label{def:SASec}
        A protocol $\pi=(\pi_C,\pi_S)$ applied on a resource $\R$ is an $\varepsilon_H$-correct $\varepsilon_D$-secure implementation of VDQC if for any input $\psi\in \mathcal{D}(\X)$ and any unitary $U\in\unitary(\X)$:
        \begin{itemize}
            \item \emph{Correctness}: If the server is honest, it holds\footnote{We use $F(\rho,\sigma)=\left(\Tr{\sqrt{\sqrt{\rho}\sigma\sqrt{\rho}}}\right)^2$.}:
            \begin{align}
                F(\rho_H,U(\psi))\geq 1-\varepsilon_H,
            \end{align}
            where $\rho_H$ is the state obtained from $\pi_C\R\pi_S$.
            \item \emph{Security}: For any attack of the server, it holds:
            \begin{align}
                \max_{p\in[0,1]}&F(\rho_D,p U(\psi) + (1-p)\ketbra{\bot})\geq 1-\varepsilon_D,
            \end{align}
            where $\rho_D$ is the state the client obtains from $\pi_C\R$ in composition with the attack of the server and $\ketbra{\bot}$ is orthogonal to all possible outputs of the protocol in case of no abort.
        \end{itemize}
    \end{definition}
    
\noindent We now prove the trade-off with respect to Definition \ref{def:SASec}, using the results from the previous section. 

\begin{theorem}\label{theo:FidBased}
Let $\pi=(\pi_C,\pi_S)$ be a protocol that implements VDQC from quantum communication channels as described in Protocol \ref{proto}. If $\pi$ is $\varepsilon_H$-correct and $\varepsilon_D$-secure according to Definition \ref{def:SASec}, it holds that $\varepsilon_H+\varepsilon_D\geq\nicefrac{1}{7N}$, where $N$ is the expected number of test rounds. 
    \end{theorem}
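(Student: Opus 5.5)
We plan to use the phase attack $P(\alpha)$ of the previous section, on the input $\psi=\ketbra{+}^{\otimes k}$ with $U=\id_2^{\otimes k}$, and to play correctness and security against each other through the acceptance probabilities $p_H$ and $p_D$ and the bound \eqref{eq:prep_end}. The honest output is $\rho_H = p_H\,\psi + (1-p_H)\ketbra{\bot}$. Under the attack, the output round carries $\phi_\ell = (\id_2^{\otimes k-1}\otimes P(\alpha))\psi(\id_2^{\otimes k-1}\otimes P(\alpha))\D=:\psi_\alpha$, since the attack is applied in every round and the output round is independent of the trap outcome. Hence $\rho_D = p_D\,\psi_\alpha + (1-p_D)\ketbra{\bot}$.

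The first step is correctness. Because $\ketbra{\bot}$ is orthogonal to $\psi$ and $\psi$ is pure, we get $F(\rho_H,\psi)=p_H$, so $p_H\geq 1-\varepsilon_H$.

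The second step is security. Both $\rho_D$ and $\sigma_p=p\psi+(1-p)\ketbra{\bot}$ are block diagonal with respect to $\bot$. The fidelity therefore splits as
\begin{align*}
\sqrt{F(\rho_D,\sigma_p)} = \sqrt{p_D p}\,|\braket{+}{P(\alpha)|+}| + \sqrt{(1-p_D)(1-p)}.
\end{align*}
We have $|\braket{+}{P(\alpha)|+}|=|\cos(\alpha/2)|=:c$. Maximizing over $p$ by Cauchy--Schwarz gives $\max_p F = p_D c^2 + 1-p_D = 1-p_D\sin^2(\alpha/2)$. Security thus forces $p_D\sin^2(\alpha/2)\leq\varepsilon_D$.

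The third step combines the two. From \eqref{eq:prep_end}, $p_D\geq p_H-\sqrt{1-(1-s)^N}$ with $s=\sin^2(\alpha/2)$. Since $1-(1-s)^N\leq Ns$, this gives $p_D\geq 1-\varepsilon_H-\sqrt{Ns}$. Therefore
\begin{align*}
\varepsilon_D\geq s\left(1-\varepsilon_H-\sqrt{Ns}\right).
\end{align*}
Now choose $s=\frac{c_0}{N}$ for a constant $c_0\in(0,1]$, which is possible by choosing $\alpha$. This yields $\varepsilon_D\geq \frac{c_0}{N}(1-\sqrt{c_0}-\varepsilon_H)$, and hence
\begin{align*}
\varepsilon_H+\varepsilon_D\geq \frac{c_0(1-\sqrt{c_0})}{N} + \varepsilon_H\left(1-\frac{c_0}{N}\right).
\end{align*}
For $N\geq 1$ and $c_0\leq1$ the last term is nonnegative. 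Optimizing $c_0(1-\sqrt{c_0})$ gives $\sqrt{c_0}=2/3$ and the value $4/27>1/7$, which is the claimed bound.

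The main obstacle is the fidelity computation in the second step: we must check that the $\max_p$ is handled correctly and that no cross terms between $\bot$ and the output appear. The latter holds by the orthogonality assumption in Definition \ref{def:SASec}. A smaller point is the case $N<1$: there the choice $s=c_0/N$ is infeasible, so we take $s=1$ ($\alpha=\pi$) instead and get directly $\varepsilon_D\geq 1-\varepsilon_H-\sqrt{N}$, which is still at least a constant.
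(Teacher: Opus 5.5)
Your argument for $N\ge 1$ is essentially the paper's proof. It uses the same phase attack on $\ketbra{+}^{\otimes k}$ with $U=\id_2^{\otimes k}$, the same block-diagonal fidelity split giving $\varepsilon_D\ge p_D\sin^2(\alpha/2)$, the same use of \eqref{eq:prep_end} with Bernoulli's inequality, and the same choice $\sin^2(\alpha/2)=4/(9N)$ leading to $4/(27N)\ge 1/(7N)$. Substituting $p_H\ge 1-\varepsilon_H$ into the bound on $p_D$, rather than writing $1-p_H+p_D s\ge s(1-|p_H-p_D|)$, is only cosmetic.

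The one thing to fix is your closing remark on $N<1$, which is wrong.
\begin{itemize}
\item For exponents $0<N<1$ Bernoulli's inequality reverses.
\item At $s=1$ the bound \eqref{eq:prep_end} is vacuous, since $1-0^N=1$. So $\varepsilon_D\ge 1-\varepsilon_H-\sqrt N$ does not follow from the bounds you use.
\item Even granting it, $1-\sqrt N\to 0$ as $N\to 1$, so it is not a constant bound. And no constant could dominate $1/(7N)$ as $N\to 0$.
\end{itemize}
In fact the statement cannot hold for $N<1/14$. Every protocol is trivially $1$-correct and $1$-secure, while $1/(7N)>2$ there. The paper's proof likewise needs $4/(9N)\le 1$ and Bernoulli with exponent $N\ge 1$, so the theorem tacitly assumes $N\ge 1$. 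State that assumption explicitly rather than trying to cover $N<1$.
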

    \begin{proof}
        We consider $\psi=\ketbra{+}^{\otimes k}$,  $U=\id_2^{\otimes k}$ and the same attack as in the preparation. The state outputted by the protocol if the server is honest is
        \begin{align}\label{eq:rhoH}
            \rho_H = p_H \ketbra{+}^{\otimes k} + (1-p_H)\ketbra{\bot}. 
        \end{align}
        This implies directly
        \begin{align}
            F\left(\rho_H,\ketbra{+}^{\otimes k}\right) = p_H\Rightarrow \varepsilon_H\geq 1-p_H.
        \end{align}
        If the server is dishonest, we find
        \begin{align}\label{eq:rhoD}
            \rho_D = &p_D \ketbra{+}^{\otimes k-1} \otimes \ketbra{+_{\alpha}} \nnn +&(1-p_D)\ketbra{\bot}. 
        \end{align}
        where $\ket{+_{\alpha}}=\left(\begin{matrix}
            \nicefrac{1}{\sqrt{2}}\\\nicefrac{e^{i\alpha}}{\sqrt{2}}
       \end{matrix}\right)$.\\
        Now, if every pair of positive semi-definite operators $(R_1,S_2)\in \{P_1,Q_1\}\times\{P_2,Q_2\}$ is orthogonal, it holds:
        \begin{align*}
          F(P_1+P_2,Q_1+Q_2) = \left(\sqrt{F(P_1,Q_1)}+\sqrt{F(P_2,Q_2)}\right)^2,
        \end{align*}
        and therefore we can compute the fidelity of $\rho_D$ with $\sigma_p = p\ketbra{+}^{\otimes k}+(1-p)\ketbra{\bot}$:
        \begin{align*}
            F\left(\rho_D,\sigma_p\right) = \left(\sqrt{p_Dp\cos\left(\frac{\alpha}{2}\right)^2}+\sqrt{\left(1-p_D\right)(1-p)}\right)^2.
        \end{align*}
        This implies according to Definition \ref{def:SASec}
        \begin{align*}
            &\varepsilon_D\geq 1-\max_{p\in[0,1]}\left(\sqrt{p_Dp\cos\left(\frac{\alpha}{2}\right)^2}+\sqrt{\left(1-p_D\right)(1-p)}\right)^2,
        \end{align*}
        Since, $\max_{p\in[0,1]}\left(\sqrt{p}a+\sqrt{1-p}b\right)^2 = a^2+b^2$ we find
        \begin{align*}
            \varepsilon_D\geq 1-p_D\cos(\frac{\alpha}{2})^2-(1-p_D) = p_D\sin(\frac{\alpha}{2})^2.
        \end{align*}
        Now we get a lower bound for the sum  of $\varepsilon_H$ and $\varepsilon_D$:
        \begin{align*}
            \varepsilon_H + \varepsilon_D &\geq 1-p_H + p_D \sin\left(\frac{\alpha}{2}\right)^2 \\&\geq \sin\left(\frac{\alpha}{2}\right)^2 (1-|p_H-p_D|).
        \end{align*}
        Using (\ref{eq:prep_end}) next gives us
        \begin{align*}
            \varepsilon_H + \varepsilon_D &\geq\sin\left(\frac{\alpha}{2}\right)^2 \left(1-\sqrt{1-\left(1-\sin(\frac{\alpha}{2})^2\right)^N}\right).
        \end{align*}
        The attacker can choose $\alpha$ and therefore also $\sin(\nicefrac{\alpha}{2})$. We let the attacker choose $\alpha$ such that $\sin(\nicefrac{\alpha}{2})^2=\nicefrac{4}{9N}$ and using $(1-\nicefrac{x}{n})^n\geq 1-x$ for $|x|\leq n$ we find
        \begin{align*}
            \varepsilon_H + \varepsilon_D &\geq\frac{4}{9N}\left(1-\sqrt{1-\left(1-\frac{4}{9}\right)}\right)\nnn
            &=\frac{4}{27N}\geq \frac{1}{7N}
        \end{align*}\qed
    \end{proof}
    \\[1em]
    \subsection{Composable security}
    Composable security definitions in frameworks such as \emph{universal composability} \cite{UC_Canetti}, \emph{abstract cryptography} \cite{MR11}, or \emph{categorical composable cryptography} \cite{Broadbent_2022} fulfill reasonable stand-alone security definitions. So, one could extend the stand-alone no-go result to composable security. However, we note that a more tailored analysis provides a better bound; in fact, the scaling for composable security is $\Omega(\nicefrac{1}{\sqrt{N}})$ and not just $\Omega(\nicefrac{1}{N})$. 
    
    \begin{figure}[h]
    \centering
    \begin{tikzpicture}
        \draw (0,0) rectangle (5,3) node[pos=0.5, align=left] {$\rho = \begin{cases}U(\psi)\text{ if } c=0\\
            \bot \text{ if c=1}\end{cases}$};
        \draw[-{stealth}] (-1,2) -- (0,2) node[pos=0.5,above] {$\psi, U$};
        \draw[-{stealth}] (0,1) -- (-1,1) node[pos=0.5,above] {$\rho$};
        \draw[gray,-{stealth}] (5,2) -- (6,2) node[pos=0.5,above] {$l^{\psi}$};
        \draw[gray,{stealth}-] (5,1) -- (6,1) node[pos=0.5,above] {$c$};
        \node[anchor=south] at (2.5,3) {$\S^{V}$};
        \draw (6,0) rectangle (7,3) node[pos=0.5] {$c=0$};
        \node[anchor=south] at (6.5,3) {$\sharp_S$};
    \end{tikzpicture}
    \caption{Ideal functionality for VDQC. The interface of the server (right interface) is obstructed by a filter $\sharp_S$ if the server is honest, which does not forward $l^{\psi}$ and inputs $c=0$. $l^{\psi}$ is the allowed leakage that contains the register size of $\psi$ and an upper bound for the circuit length of $U$.}
    \label{fig:idealVDQC}
\end{figure}
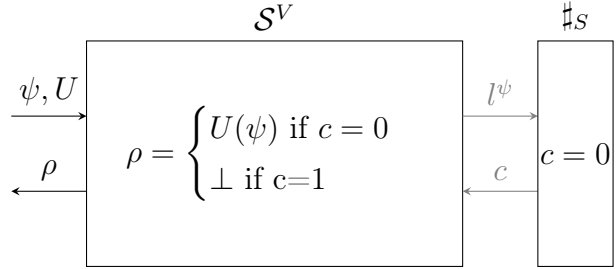
 The canonical ideal resource for VDQC is depicted in Fig. \ref{fig:idealVDQC}. A protocol that is $\varepsilon_H$-correct  with respect to $\S^V\sharp_S$ fulfills:
\begin{align}\label{eq:TDHon}
    \frac{1}{2}\left\|\rho_H-U(\psi)\right\|_1\leq \varepsilon_H
\end{align}
where $\rho_H$ is the client's average output state for the input $\psi$ in the protocol. This is because the trace distance gives the distinguishability according to the Holevo-Helstrom theorem \cite{watrous_2018}, the latter being the relevant quantity in composable cryptography.

If, on the other side, the protocol is $\varepsilon_D$ composably secure with respect to $\S^V$, this implies that for every attack of the server on the implementation, there is an attack on the ideal setup that renders both setups indistinguishable. In \emph{abstract cryptography} and \emph{universal composability}, a simulator attached to the ideal resource converts attacks on the real setting into attacks on the ideal setting; in \emph{categorical composable cryptography}, one defines an attack model and can choose the attack on the ideal setting freely. In either framework, it holds that 
\begin{enumerate}
    \item tracing out the state the server might hold cannot increase distinguishability,
    \item the attack on the ideal setting can only input $c=0$ with some probability $p$ and $c=1$ with probability $1-p$.
\end{enumerate}
Hence, $\varepsilon_D$-composable security implies
\begin{align}\label{eq:TDDis}
    \frac{1}{2}\left\|\rho_D-(pU(\psi)+(1-p)\ketbra{\bot})\right\|_1\leq \varepsilon_D,
\end{align}
where $\rho_D$ is the state of the client in the implementation.
We now prove a lower bound concerning the trace distances presented in \eqref{eq:TDHon} and \eqref{eq:TDDis}. Again, the proof follows ideas similar to those presented in \cite{us}.
\begin{theorem}\label{thm:compo}
    Let $\pi=(\pi_C,\pi_S)$ be a protocol that implements VDQC from quantum communication channels as described in Protocol \ref{proto}, $\rho_H$ be the output state if the server is honest and $\rho_D$ if the server is dishonest. If
    \begin{align*}
        &\frac{1}{2}\left\|\rho_H-U(\psi)\right\|_1\leq \varepsilon_H \text{~~and }\\
        \min_{p\in[0,1]}&\frac{1}{2}\left\|\rho_D-(pU(\psi)+(1-p)\ketbra{\bot})\right\|_1\leq \varepsilon_D,
    \end{align*}
    it holds
    \begin{align*}
        \varepsilon_H+\varepsilon_D\geq \frac{1}{4\sqrt{N}},
    \end{align*}
    where $N$ is the expected number of test rounds.
\end{theorem}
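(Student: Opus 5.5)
The plan is to reuse the attack from the preparation section: the server applies $P(\alpha)$ to the $k^{th}$ qubit in every round, with input $\psi=\ketbra{+}^{\otimes k}$ and $U=\id_2^{\otimes k}$. This lets us invoke \eqref{eq:prep_end}. Throughout, write $s=\sin(\nicefrac{\alpha}{2})$ and $c=\cos(\nicefrac{\alpha}{2})$. As in the proof of \thmref{theo:FidBased}, the output states are $\rho_H$ from \eqref{eq:rhoH} and $\rho_D$ from \eqref{eq:rhoD}.

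\emph{Honest side.} The state $\ketbra{\bot}$ is orthogonal to $\ketbra{+}^{\otimes k}$, so $\rho_H-U(\psi)=-(1-p_H)\ketbra{+}^{\otimes k}+(1-p_H)\ketbra{\bot}$. This gives $\frac{1}{2}\|\rho_H-U(\psi)\|_1=1-p_H$, and hence $\varepsilon_H\geq 1-p_H$.

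\emph{Dishonest side.} Fix $p$ and set $\sigma_p=p\ketbra{+}^{\otimes k}+(1-p)\ketbra{\bot}$. The operator $\rho_D-\sigma_p$ is block diagonal with respect to the span of $\ketbra{\bot}$ and its complement. The $\bot$-block contributes $|p-p_D|$ to the trace norm. The other block is $p_D\ketbra{a}-p\ketbra{b}$, where $\ket{a}=\ket{+}^{\otimes k-1}\ket{+_\alpha}$ and $\ket{b}=\ket{+}^{\otimes k}$, so $|\braket{a}{b}|=c$. This rank-two operator has trace norm $\sqrt{(p_D+p)^2-4p_Dpc^2}=\sqrt{(p_D-p)^2+4p_Dps^2}$. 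Therefore
\begin{align*}
f(p):=\left\|\rho_D-\sigma_p\right\|_1=\sqrt{(p_D-p)^2+4p_Dps^2}+|p_D-p|.
\end{align*}
The function $f$ is convex in $p$, since the trace norm composed with an affine map is convex. Its one-sided derivatives at $p=p_D$ are $s-1\leq 0$ from the left and $s+1\geq 0$ from the right. Hence the minimum is attained at $p=p_D$, where $f(p_D)=2p_Ds$. This yields $\varepsilon_D\geq p_D s$. I expect this minimization to be the main technical step: one has to check the derivative signs carefully, including the degenerate case $p_D=0$, where the bound is trivial.

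\emph{Combining.} Since $s\leq 1$, we get $\varepsilon_H+\varepsilon_D\geq 1-p_H+p_Ds\geq s(1-p_H+p_D)\geq s(1-|p_H-p_D|)$. Applying \eqref{eq:prep_end} gives
\begin{align*}
\varepsilon_H+\varepsilon_D\geq s\left(1-\sqrt{1-(1-s^2)^N}\right).
\end{align*}
The attacker now chooses $\alpha$ with $s^2=\nicefrac{1}{4N}$. By $(1-\nicefrac{x}{N})^N\geq 1-x$ we have $(1-s^2)^N\geq \nicefrac{3}{4}$, so the square root is at most $\nicefrac{1}{2}$. This gives $\varepsilon_H+\varepsilon_D\geq \frac{1}{2\sqrt{N}}\cdot\frac{1}{2}=\frac{1}{4\sqrt{N}}$.

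The gain over \thmref{theo:FidBased} comes from the trace distance scaling like $s$ rather than $s^2$. That improvement is exactly what the convexity argument above has to secure.
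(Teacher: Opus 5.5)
Your proposal is correct and follows the paper's proof essentially step for step: the same phase attack with $\psi=\ketbra{+}^{\otimes k}$ and $U=\id_2^{\otimes k}$, the bound $\varepsilon_H\geq 1-p_H$, the block decomposition giving $\varepsilon_D\geq p_D|\sin(\nicefrac{\alpha}{2})|$, the combination $\varepsilon_H+\varepsilon_D\geq|\sin(\nicefrac{\alpha}{2})|(1-|p_H-p_D|)$ with \eqref{eq:prep_end}, and the choice $|\sin(\nicefrac{\alpha}{2})|=\nicefrac{1}{2\sqrt{N}}$. The only difference is the minimization over $p$: instead of your explicit rank-two trace-norm formula and convexity/derivative check, the paper uses the triangle inequality $\|p_D\ketbra{+_\alpha}-p\ketbra{+}\|_1+\|(p-p_D)\ketbra{+}\|_1\geq p_D\|\ketbra{+_\alpha}-\ketbra{+}\|_1$ and then plugs in $p=p_D$, which is shorter and handles the degenerate cases automatically.
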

\begin{proof}
    As in Theorem \ref{theo:FidBased}, we consider $\psi=\ketbra{+}^{\otimes k}$ and $U = \id_2^{\otimes k}$. We also consider the same attack, which means that $\rho_H$ and $\rho_D$ are given in \eqref{eq:rhoH} and \eqref{eq:rhoD} respectively. 
    As $\ketbra{\bot}$ is orthogonal to the space of possible output, \eqref{eq:TDHon} becomes:
    \begin{align*}
        \varepsilon_H\geq \frac{1}{2}\left\|\rho_H-\ketbra{+}^{\otimes k}\right\|_1 = 1-p_H
    \end{align*}
    We note that $\|\phi\otimes \nu - \chi\otimes \nu\|_1 = \|\phi - \chi\|_1$ for any density matrices $\phi,\chi$ and $\nu$. Further, if every pair of positive semi-definite operators $(R_1,S_2)\in \{P_1,Q_1\}\times\{P_2,Q_2\}$ is orthogonal, it holds
        \begin{align*}
          \|P_1+P_2-Q_1+Q_2\|_1 = \|P_1-Q_1\|_1 + \|P_2-Q_2\|_1.
    \end{align*}
    Hence, if the server is dishonest, we find for \eqref{eq:TDDis}:
    \begin{align*}
        \varepsilon_D\geq \min_{p\in[0,1]}&\frac{1}{2}\left\|\rho_D-(p\ketbra{+}^{\otimes k}+(1-p)\ketbra{\bot})\right\|_1\\
        =\min_{p\in[0,1]}&\frac{1}{2}\left\|p_D\ketbra{+_\alpha}-p\ketbra{+}\right\|_1 + \frac{1}{2}|p-p_D|.
    \end{align*}
    Similar to \cite{us}, we use the triangle-inequality and the fact that $\|\ketbra{+}\|_1=1$ implies $|p-p_D| = \|p\ketbra{+}-p_D\ketbra{+}\|_1$, and we find:
    \begin{align*}
        \min_{p\in[0,1]}&\frac{1}{2}\left\|p_D\ketbra{+_\alpha}-p\ketbra{+}\right\|_1 + \frac{1}{2}|p-p_D|\geq \frac{p_D}{2}\left\|\ketbra{+_\alpha}-\ketbra{+}\right\|_1.
    \end{align*}
    On the other side, considering $p=p_D$ yields
    \begin{align*}
        \min_{p\in[0,1]}&\frac{1}{2}\left\|p_D\ketbra{+_\alpha}-p\ketbra{+}\right\|_1 + \frac{1}{2}|p-p_D|\leq \frac{p_D}{2}\left\|\ketbra{+_\alpha}-\ketbra{+}\right\|_1,
    \end{align*}
    therefore, the left- and right-side of the above two inequalities are equal. Hence, we find:
    \begin{align*}
        \varepsilon_H+\varepsilon_D&\geq 1-p_H + \frac{p_D}{2}\left\|\ketbra{+_\alpha}-\ketbra{+}\right\|_1 \\&\geq \frac{1}{2}\left\|\ketbra{+_\alpha}-\ketbra{+}\right\|_1\left(1-|p_H-p_D|\right). 
    \end{align*}
    Using $\frac{1}{2}\left\|\ketbra{+_\alpha}-\ketbra{+}\right\|_1 = |\sin(\nicefrac{\alpha}{2})|$ and \eqref{eq:prep_end} we find
    \begin{align*}
        \varepsilon_H+\varepsilon_D&\geq|\sin(\nicefrac{\alpha}{2})|\left(1-\sqrt{1-\left(1-\sin(\nicefrac{\alpha}{2})^2\right)^N}\right).
    \end{align*}
    We finally choose $|\sin(\nicefrac{\alpha}{2})| = \nicefrac{1}{2\sqrt{N}}$ and find with $(1-\nicefrac{x}{n})^n\geq 1-x$ for $|x|\leq n$,
    \begin{align*}
        \varepsilon_H+\varepsilon_D&\geq\frac{1}{2\sqrt{N}}\left(1-\sqrt{1-\left(1-\frac{1}{4N}\right)^N}\right) \geq \frac{1}{4\sqrt{N}}
    \end{align*}\qed
\end{proof}
\section{Discussion and open questions\label{sec:discussion}}
In our work, we prove the existence of a fundamental trade-off between correctness, security, and efficiency of verifiable delegated quantum computing protocols that solely rely on the cut-and-choose technique. We derive this trade-off for both fidelity-based stand-alone security and composable security, and observe that composable security is affected more severely. Our results can also be adapted to settings where multiple, potentially untrusted, clients delegate quantum computations to a remote server. For the sake of simplicity, in the main text we consider that the client can only send separable pure states for the test rounds and has to sample the output round uniformly. Both restrictions are lifted in the appendix, where we consider general test strategies and find a similar trade-off.
    
The same trade-off as in the main part has also been shown to affect quantum state verification protocols that are based on cut-and-choose \cite{us}. This similarity raises the question of whether other functionalities for verification using this technique are similarly affected. However, there are protocols \cite{Mor14,Fitzsimons_2017,Kashefi_2017_Petros} for verifiable delegated quantum computing that circumvent this trade-off by leveraging quantum error correction combined with partitioning the resource state for each computation in the cut-and-choose procedure. We expect that one layer of quantum error correction suffices to provide security and fault tolerance, since there is no relevant difference whether the server is malicious or its devices are imperfect. Nevertheless, this double role of error correction imposes a trade-off on its own: More tolerated errors increase the risk of accepting corrupted states, while tolerating fewer errors demands better experimental implementations. The investigation of verification with noisy devices \cite{Gheorghiu_2019}, which also results in non-negligible security, suggests that the double role of error-correction could pose a problem indeed. It remains to investigate under which conditions this trade-off allows for realistic and secure setups. Finally, our work does show that the overheads of techniques such as error correction are inevitable in principle, which motivates further research in this direction.

\section{Ackowledgments}
F. W. and A. P. acknowledge support from the Emmy Noether DFG grant
No. 41829458. This work was funded by the European Union's Horizon Europe research and innovation programme under grant agreement No. 101102140 – QIA Phase 1. This project was financially supported by BERLIN QUANTUM, an initiative endowed by the Innovation Promotion Fund of the city of Berlin.
\section{References}
\printbibliography[
heading=none]
\appendix
\section{General tests}
In \proref{proto}, the client is restricted to use pure states for the test, to keep no register for themselves and to maintain seperability between the rounds. This restriction allows to have proofs of the trade-offs that are rather similar to these for quantum state verification in \cite{us}, in which the client has no input at all. While this restriction is also motivated by the assumed incapabilities that cause the client to use verifiable delegated quantum computing instead of computing the unitary themselves, it leaves a loophole in the trade-off that will be closed in this subsection. The first step towards closing this loophole is the formalisation of the new considered protocol type. The new component of the protocol type is the \emph{channel network}. The most general thing the client can do is to compose the unitaries of the test rounds using sequential and parallel composition as they like. One condition, however, is that the order of delegating these individual tests is preserved.  
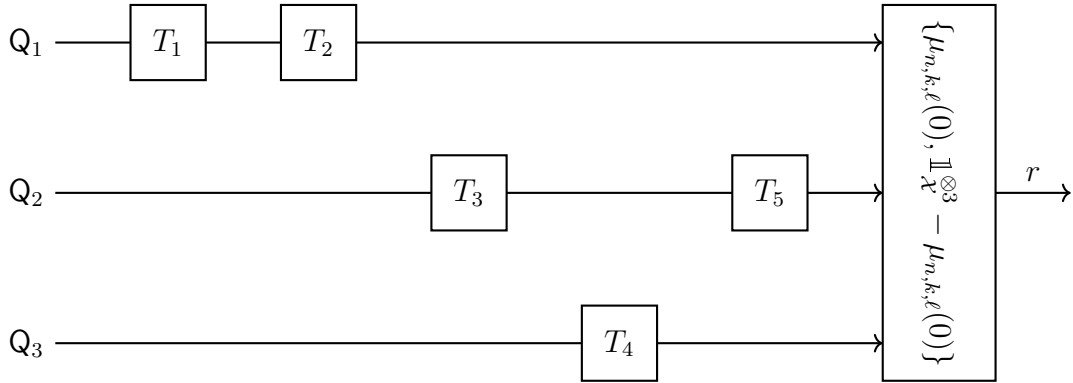
\begin{figure}[H]
    \centering 
    \begin{tikzpicture}
        \draw[thick,->] (0,4.5) -- (11,4.5) node[pos=0,anchor=east] {$\Qsf_1$};
        \draw[thick,->] (0,2.5) -- (11,2.5) node[pos=0,anchor=east] {$\Qsf_2$};
        \draw[thick,->] (0,0.5) -- (11,0.5) node[pos=0,anchor=east] {$\Qsf_3$};
        \draw[thick,fill=white] (1,4) rectangle (2,5) node[pos=0.5] {$T_1$};
        \draw[thick,fill=white] (3,4) rectangle (4,5) node[pos=0.5] {$T_2$};
        \draw[thick,fill=white] (5,2) rectangle (6,3) node[pos=0.5] {$T_3$};
        \draw[thick,fill=white] (7,0) rectangle (8,1) node[pos=0.5] {$T_4$};
        \draw[thick,fill=white] (9,2) rectangle (10,3) node[pos=0.5] {$T_5$};
        \draw[thick] (11,0) rectangle (12.5,5) node[pos=0.5,rotate=-90] {$\{\mu_{n,k,\ell}(0),\id_{\X}^{\otimes 3}-\mu_{n,k,\ell}(0)\}$};
        \draw[thick,->] (12.5,2.5) -- (13.5,2.5) node[pos=0.5,above] {$r$};
    \end{tikzpicture}
    \caption{An example for a network of $5$ test unitaries that the client uses to catch the server cheating followed by the measurement.}
    \label{fig:VDQC_network}
\end{figure}

In \figref{fig:VDQC_network} we see such an example of a channel network, where a client with three registers of size $\density(\X)$ and a register $\Ksf$ they keep for themselves, runs five test rounds. The first register, $\Qsf_1$, is used for the first and second test round, i.e. $T_1\circ T_2$ is applied. The second, $\Qsf_2$, is used for the third and the fifth test, i.e. $T_3\circ T_5$ is applied. Finally, $T_4$ acts on $\Qsf_3$. After this network of unitaries is applied the client measures the registers, including $\Ksf$, and decides whether to accept or reject the server's behavior. 

An important aspect of these networks is the order of the unitaries. Since the client can not change the order of the rounds which label the test unitaries, there is no reasonable interpretation for the application of $T_2$ followed by the application of $T_1$, i.e. $T_2\circ T_1$. This perseverance of order is already captured in \emph{$n$-combs} which formalize these networks, where $n$ is the number of test rounds. An $n$-comb gets a sequence of $n$ operations and intertwines these with $n+1$ other operations called \emph{teeth}, where the teeth are connected by memory registers \cite{Broadbent_2023, us_Combs}. Hence, the $n$-comb is for the client a mapping from $n$ unitaries in $\unitary(\X)$, plugged into the \emph{holes} of the $n$-comb, to a channel of type $\channel\left(\X^{\otimes W},\X^{\otimes W}\right)$, where $W$ depends on the number of registers the client uses. Note that if the client just distributed the unitaries on different registers, the teeth basically just changed which register is part of the memory and which register goes to the next unitary, and the resulting operation will be a unitary. However, in an $n$-comb the client could also apply other operations within the teeth such that the resulting operation is no longer a unitary. 

\subsection{Protocol type}
In the following, the parameters of this general cut-and-choose protocol for verifiable delegated quantum computing will be revisited or added. Lifting the restrictions on the client's test has no influence on formalization of the probability distribution $\Omega:\mathbb{N}\to[0,1]$ used to sample the number of test rounds and also $k$ still denotes the number of qubit register the input consists of and the notation $\X = \Cc^{2^k}$ is used again. The probability distributions $\{\omega_n:\{1,...,n+1\}\to[0,1]\}_{n\in\mathbb{N}}$ govern the round that is used for the output in dependence on $n$. 

The $n$-comb $\mathfrak{B}_{n,k,\ell}:\unitary(\X)^{n}\to\channel\left(\X^{\otimes W_{n,k,\ell}},\X^{\otimes W_{n,k,\ell}}\right)$ that compiles the test operation has to depend on the number of test rounds $n$ and the number of qubit registers $r$. The client could also choose the $n$-comb in dependence of the outcome round $\ell$. The same holds for the number $W_{n,k,\ell}\in \mathbb{N}$ of registers that are used for the tests. 

The test-generators are indeed rather different in the general version; instead of mapping a round to a tuple of a pure quantum state and a unitary, a test $\Theta_{n,k}$ is a tuple of a quantum state $\chi_n\in \density\left(\X^{\otimes W_{n,k,\ell}}\otimes \Y\right)$ and a sequence of unitaries $(T_i\in\unitary(\X))_{i=1}^n$, where $\Y$ is the space that the client uses to boost its distinguishing advantage.

Finally, the measurement also changed; for each test round number $n$, number of qubit register $k$ and output round $\ell$, the client chooses a measurement operator $\mu_{n,k,\ell}(0)\in\pos(\X^{\otimes W_{n,k,\ell}}\otimes \Y)$, which again corresponds to accepting the server's behavior. Note that the register in $\Y$ is measured as well which could allow for catching a cheating server with higher probability.

The protocol type that utilizes this generalized testing is denoted in \proref{protoEnt} below.
\begin{algorithm}[H]
        \caption{The client inputs a register $\Rsf_C$ corresponding to $\density(\X)$ and a unitary $\U\in\unitary(\X)$.
        }\label{protoEnt}
        \begin{algorithmic}[1]
            \Statex
            \State The client samples the number of test rounds $n\gets_{\Omega}\mathbb{N}$ according to $\Omega$.
            \State The client samples the round that is used for output $\ell\gets_{\omega_n}\{1,...,n+1\}$ according to $\omega_n$.
            \State The client fixes $W_{n,k,\ell}$ and $\mathfrak{B}_{n,k,\ell}$.
            \State The client initializes the register $\Qsf_1,...,\Qsf_{W_{n,k,\ell}},\Ksf \gets \chi_n$ with the test state.
            \For{$i=1,...,n+1$}
                \If{$i\neq \ell$}
                    \State The client delegates a test-computation with the input $\Qsf_j$ which is the input register in the $i$th hole of $\mathfrak{B}_{n,k,\ell}$ and the unitary $T_i$ and saves the output of the computation.
                \Else
                    \State The client delegates $\U$ on the input $\Rsf_C$ and loads the result into the register $\Rsf'_C$.
                \EndIf
            \EndFor
            \State The client measures the outputs of the test-computations along with $\Ksf$ using $\left\{\mu_{n,k,\ell}(0),\id_{\X}^{\otimes W_{n,k,\ell}}-\mu_{n,k,\ell}(0)\right\}$. The output is $r$.
            \If{$r=0$}
                \State The client outputs $\Rsf'_C$.
            \Else
                \State The client output $\bot$
            \EndIf
        \end{algorithmic}
    \end{algorithm}

    \subsection{Preliminaries for the proofs of the trade-offs}
    In fact, an attack of the same type as used for the restricted protocol type is sufficient here as well. Hence, the server either applies before or after the delegated unitary a rotation of the form $\id_{\Cc^2}^{\otimes k-1}\otimes P(\alpha)$. For an attack of this type, one can find a similar result as Section \ref{sec:Results} for this general situation. 
    \begin{lemma}\label{lem:prelimVDQCExp}
    Let $\pi=\{\pi_C,\pi_S\}$ be a protocol as described in \proref{proto}. Let 
    \begin{align*}
        p_{(n,k,\ell)}^H&\coloneqq \braket{\mu_{n,k,\ell}(0)}{\left(\mathfrak{B}_{n,k,\ell}((T_i)_{i=1}^{n})\otimes \id_{\Y}\right)(\ketbra{\chi_n})},\\
        p_{(n,k,\ell)}^D&\coloneqq \braket{\mu_{n,k,\ell}(0)}{\left(\mathfrak{B}_{n,k,\ell}((T^{\alpha}_i)_{i=1}^{n})\otimes \id_{\Y}\right)(\ketbra{\chi_n})}
    \end{align*}
    be the acceptance probability for fixed $n$ and $\ell$ for the honest $\left(p_{(n,k,\ell)}^H\right)$ and dishonest $\left(p_{(n,k,\ell)}^D\right)$ setting where the server applies the attack described above. The overall acceptance probabilities are hence given by
    \begin{align*}
        p_H\coloneqq \sum_{n=0}^{\infty}\Omega(n)\sum_{\ell=1}^{n+1}\omega_n(\ell)p_{(n,k,\ell)}^H\mand 
        p_D\coloneqq \sum_{n=0}^{\infty}\Omega(n)\sum_{\ell=1}^{n+1}\omega_n(\ell)p_{(n,k,\ell)}^D.
    \end{align*}
    It holds
    \begin{align*}
        |p_H-p_D|\leq N\left|\sin(\frac{\alpha}{2})\right|,
    \end{align*}
    where $N$ denotes the expected number of test rounds.
\end{lemma}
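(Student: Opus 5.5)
The plan is to first reduce the statement to a per-$(n,\ell)$ bound and then average over $n$ and $\ell$. Exactly as in \eqref{eq:phminpd1}, the triangle inequality gives
\begin{align*}
|p_H-p_D|\leq \sum_{n=0}^{\infty}\Omega(n)\sum_{\ell=1}^{n+1}\omega_n(\ell)\left|p^H_{(n,k,\ell)}-p^D_{(n,k,\ell)}\right|.
\end{align*}
It therefore suffices to show $|p^H_{(n,k,\ell)}-p^D_{(n,k,\ell)}|\leq n|\sin(\nicefrac{\alpha}{2})|$ for every $n$ and $\ell$. Since $\sum_\ell\omega_n(\ell)=1$ and $\sum_n\Omega(n)n=N$, the claimed bound then follows immediately. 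No concavity or Jensen step is needed, because the bound will be linear in $n$.

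For fixed $n$ and $\ell$, both probabilities are the expectation of the same POVM element $0\leq\mu_{n,k,\ell}(0)\leq\id$. Hence by Holevo--Helstrom
\begin{align*}
\left|p^H_{(n,k,\ell)}-p^D_{(n,k,\ell)}\right|\leq \frac12\left\|\left(\mathfrak{B}((T_i)_i)\otimes\id_\Y\right)(\chi_n)-\left(\mathfrak{B}((T^\alpha_i)_i)\otimes\id_\Y\right)(\chi_n)\right\|_1 .
\end{align*}
We cannot use the tensor-product fidelity formula of Section~\ref{sec:Results} here: the comb may entangle, reuse registers sequentially and insert non-unitary teeth. Instead I would use a hybrid argument. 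Define $\mathfrak{B}^{(j)}$ as the comb with $T^\alpha_1,\dots,T^\alpha_j$ in the first $j$ holes and $T_{j+1},\dots,T_n$ in the remaining ones. Telescoping gives at most $n$ terms, each comparing two combs that differ only in a single hole $j$.

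For a single-hole difference, the comb with all other holes filled is a channel $\mathcal{A}$ (the earlier teeth and holes, tensored with memory), followed by the hole, followed by a channel $\mathcal{C}$. Monotonicity of the trace norm under $\mathcal{C}$ reduces the term to
\[
\tfrac12\left\|\left((T_j-T_j^\alpha)\text{-conjugation}\otimes\id\right)(\sigma)\right\|_1
\]
for some state $\sigma$ on the hole input and the memory. The difference of the two unitary channels is bounded by the diamond norm $\frac12\|\mathcal{T}_j-\mathcal{T}^\alpha_j\|_\diamond$. Since $T_j^\alpha$ equals $T_j$ composed with $V=\id_2^{\otimes k-1}\otimes P(\alpha)$ on one side, unitary invariance reduces this to $\frac12\|\mathcal{V}-\mathrm{id}\|_\diamond$.

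The key step, and the main obstacle, is computing this diamond norm. For a unitary $V$, the known formula gives $\frac12\|\mathcal{V}-\mathrm{id}\|_\diamond=\sqrt{1-\min_{\ket{u}}|\bra{u}V\ket{u}|^2}$, where the optimum over entangled inputs reduces to the numerical range of $V$. This is the same Toeplitz--Hausdorff computation as in Section~\ref{sec:Results}. It yields a minimum of $\cos(\nicefrac{\alpha}{2})^2$, so each hybrid term is at most $\sqrt{1-\cos^2(\nicefrac{\alpha}{2})}=|\sin(\nicefrac{\alpha}{2})|$.

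If I want to avoid citing the diamond-norm formula, I can bound each term directly. Purify $\sigma$ to a state $\ket{w}$ on the hole input and a reference. The pure-state trace distance is $\sqrt{1-|\bra{w}(V\otimes\id)\ket{w}|^2}$. The quantity $\bra{w}(V\otimes\id)\ket{w}$ lies in the numerical range of $V\otimes\id$, which still has only the eigenvalues $1$ and $e^{i\alpha}$. The same convexity argument therefore gives $|\sin(\nicefrac{\alpha}{2})|$. Summing the $n$ hybrid terms gives $n|\sin(\nicefrac{\alpha}{2})|$, and averaging over $n$ and $\ell$ gives $N|\sin(\nicefrac{\alpha}{2})|$.
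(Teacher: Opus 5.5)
Your proposal is correct and follows essentially the same route as the paper: triangle inequality and Holevo--Helstrom, reduction to the per-hole diamond distances between $T_i$ and $T_i^{\alpha}$, unitary invariance, and the numerical-range computation giving $|\sin(\alpha/2)|$ per test round, which you then sum to $n|\sin(\alpha/2)|$ and average to $N|\sin(\alpha/2)|$. The only differences are that your hybrid (telescoping) argument proves directly the subadditivity of the diamond distance under combs, which the paper instead cites from Watrous, and that you correctly state the minimum as $\cos(\alpha/2)^2$, where the paper's text has a typo.
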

\begin{proof}
    The proof indeed follows ideas similar to those for the result in Section \ref{sec:Results} and uses standard results for the diamond distance, which we denote as a difference in the completely bounded trace distance $\nicefrac{1}{2}\triplenorm{(\cdot)}_1$, and the trace norm. First, the triangle-inequality and the Holevo-Helstrom theorem imply
    \begin{align*}
        |p_H-p_D| \leq \sum_{n=0}^{\infty}\Omega(n)\sum_{\ell=1}^{n+1}\omega_n(\ell)\frac{1}{2}&\left\|\left(\mathfrak{B}_{n,k,\ell}((T_i)_{i=1}^{n})\otimes \id_{\Y}\right)(\ketbra{\chi_n})-\right.\\&\left.\ \left(\mathfrak{B}_{n,k,\ell}((T^{\alpha}_i)_{i=1}^{n})\otimes \id_{\Y}\right)(\ketbra{\chi_n})\right\|_1.
    \end{align*} 
    Since the diamond distance is larger or equal to the above trace distance (cf. Theorem 3.46 in \cite{watrous_2018}),
    one can write:
    \begin{align*}
        |p_H-p_D| \leq \sum_{n=0}^{\infty}\Omega(n)\sum_{\ell=1}^{n+1}\omega_n(\ell)\frac{1}{2}\triplenorm{\mathfrak{B}_{n,k,\ell}((T_i)_{i=1}^{n})-\mathfrak{B}_{n,k,\ell}((T^{\alpha}_i)_{i=1}^{n})}_1.
    \end{align*}
    Since the diamond distance does not increase under the applications of combs\footnote{We refer here to the discussion in subsection \textit{Distances between networks of channels} in \cite{watrous_2018} or subsection 3.1 in \cite{us_Combs}.}, 
    this implies 
    \begin{align*}
        |p_H-p_D| \leq \sum_{n=0}^{\infty}\Omega(n)\frac{1}{2}\sum_{i=1}^n\triplenorm{T_i-T^{\alpha}_i}_1.
    \end{align*}
    If the server applies the rotation after the delegated unitary, by rewriting the diamond distance in the sum and using the fact that the completely bounded trace norm is multiplicative under the tensor product and sequential composition (cf. Proposition 3.48 and Theorem 3.49 in \cite{watrous_2018}), we get:
    \begin{align*}
        \triplenorm{T_i-T^{\alpha}_i}_1 &= \triplenorm{T_i\left(\id_{\Cc^2}^{\otimes k}-\id_{\Cc^2}^{\otimes k-1}\otimes P(\alpha)\right)}_1 \\&= \triplenorm{T_i}_1 \triplenorm{\id_{\Cc^2}^{\otimes k}-\id_{\Cc^2}^{\otimes k-1}\otimes P(\alpha)}_1 = \triplenorm{\id_{\Cc^2}-P(\alpha)}_1
    \end{align*}
    If the server applies the rotation before the delegated unitary, we get:
    \begin{align*}
        \triplenorm{T_i-T^{\alpha}_i}_1 &= \triplenorm{\left(\id_{\Cc^2}^{\otimes k}-\id_{\Cc^2}^{\otimes k-1}\otimes P(\alpha)\right)T_i}_1 \\&= \triplenorm{\id_{\Cc^2}^{\otimes k}-\id_{\Cc^2}^{\otimes k-1}\otimes P(\alpha)}_1 \triplenorm{T_i}_1 = \triplenorm{\id_{\Cc^2}-P(\alpha)}_1.
    \end{align*}
    Theorem 3.51 in \cite{watrous_2018}) allows us to express the diamond distance as a trace distance of the maps applied on pure states:
    \begin{align*}
        \frac{1}{2}\triplenorm{\id_{\Cc^2}-P(\alpha)}_1 &= \max_{\ket{u}\in\sphere\left(\Cc^{2}\otimes\Cc^{2}\right)}\frac{1}{2}\left\|\id_{\Cc^2\otimes\Cc^2}(\ketbra{u})-(P(\alpha)\otimes \id_{\Cc^2})(\ketbra{u})\right\|_1\\
        & = \sqrt{1-\min_{\ket{u}\in\sphere\left(\Cc^{2}\otimes\Cc^{2}\right)}\left|\bra{u}(P(\alpha)\otimes \id_{\Cc^2})\ket{u}\right|^2},
    \end{align*}
    where we used the expression for the trace distance of pure states for the second equality.
    As in the main part, one finds that
    \begin{align*}
        \min_{\ket{u}\in\sphere\left(\Cc^{2}\otimes\Cc^{2}\right)}\left|\bra{u}(P(\alpha)\otimes \id_{\Cc^2})\ket{u}\right|^2 = \sin(\frac{\alpha}{2}),
    \end{align*}
    since the operator $P(\alpha)\otimes \id_{\Cc^2}$ has only eigenvalues $1$ and $e^{i\alpha}$ and the same arguments for numerical range apply. 
    Hence, using $\frac{1}{2}\triplenorm{T_i-T^{\alpha}_i}_1 = |\sin(\nicefrac{\alpha}{2})|$ for all $0\leq i\leq n$ yields:
    \begin{align*}
        |p_H-p_D| \leq |\sin(\nicefrac{\alpha}{2})|\sum_{n=0}^{\infty}\Omega(n)n = N |\sin(\nicefrac{\alpha}{2})|,
    \end{align*}\qed
\end{proof}

Note how this result differs from the one for the simpler case. The new result uses a bound which essentially is of the same nature as a composition theorem: Distinguishing two compositions with the same structure -- here the same comb -- cannot be harder than distinguishing the components of the compositions. Interestingly, this combination of bounds implies that the overall bound for the difference in acceptance probabilities does not converge for $N\to\infty$ although the difference can be at most $1$ in the first place. 

The result for the simpler case $\left(|p_H-p_D| \leq \sqrt{1-\left(1-\sin(\nicefrac{\alpha}{2})^2\right)^N}\right)$ uses a reduction to an i.i.d. argument. For each round, we considered the optimal measurement for the client,where restricting to pure state testing implies an exponential convergence to $1$.

\subsection{Fidelity-based security}
Equipped with this preparation, the proofs for the trade-offs considering fidelity-based and composable security are not much different from the ones in the simpler setting. First, the trade-off for fidelity-based security is presented in \thmref{theo:FidBased_Ext}.
\begin{theorem}\label{theo:FidBased_Ext}
    Let $\pi=\{\pi_C,\pi_S\}$ be a verifiable delegated quantum computing protocol as described in \proref{protoEnt}. Assume the server is able to apply a unitary on the last qubit register of either the input before the delegated unitary is applied or of the output. Under this condition, if $\pi$ is $\varepsilon_H$-correct and $\varepsilon_D$-secure according to \defref{def:SASec}, it holds that
    \begin{align*}
        \varepsilon_H+\varepsilon_D\geq \frac{1}{7N^2},
    \end{align*}
    where $N$ is the expected number of test rounds.
\end{theorem}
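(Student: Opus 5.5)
We follow the proof of \thmref{theo:FidBased} step by step and replace only the bound on $|p_H-p_D|$ by the one from \lemref{lem:prelimVDQCExp}.

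\textbf{Setup and correctness.} We take the input $\psi=\ketbra{+}^{\otimes k}$ and the computation $U=\id_2^{\otimes k}$. The server behaves honestly, except that it applies $\id_2^{\otimes k-1}\otimes P(\alpha)$ in every round, either before or after the delegated unitary. The output round $\ell$ is treated like any other round, so in the dishonest case the client's output register holds $\ketbra{+}^{\otimes k-1}\otimes\ketbra{+_\alpha}$; the phase commutes trivially with $U=\id$, so its position does not matter. Conditioned on acceptance, the client outputs the computation register. Its state does not depend on the test registers, since the honest computation round is unentangled from the tests. Hence $\rho_H$ and $\rho_D$ again have the forms \eqref{eq:rhoH} and \eqref{eq:rhoD}, with $p_H,p_D$ now the general acceptance probabilities of \lemref{lem:prelimVDQCExp}. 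Exactly as before, correctness gives $\varepsilon_H\geq 1-p_H$.

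\textbf{Security.} We reuse the fidelity decomposition for orthogonal supports and the identity $\max_{p}(\sqrt{p}a+\sqrt{1-p}b)^2=a^2+b^2$. This gives $\varepsilon_D\geq p_D\sin(\nicefrac{\alpha}{2})^2$. Combining the two estimates,
\begin{align*}
\varepsilon_H+\varepsilon_D\geq 1-p_H+p_D\sin(\tfrac{\alpha}{2})^2\geq \sin(\tfrac{\alpha}{2})^2\left(1-|p_H-p_D|\right).
\end{align*}
The second inequality holds because $1-p_H\geq \sin^2(1-p_H)$ and $p_D\geq p_H-|p_H-p_D|$.

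\textbf{Bound on $|p_H-p_D|$ and choice of $\alpha$.} Now we invoke \lemref{lem:prelimVDQCExp}, which gives $|p_H-p_D|\leq N|\sin(\nicefrac{\alpha}{2})|$. Writing $s=|\sin(\nicefrac{\alpha}{2})|$, we get
\begin{align*}
\varepsilon_H+\varepsilon_D\geq s^2(1-Ns).
\end{align*}
The attacker optimises over $s$. Setting $s=\nicefrac{c}{N}$ turns the bound into $\nicefrac{c^2(1-c)}{N^2}$, which is maximised at $c=\nicefrac{2}{3}$ with value $\nicefrac{4}{27N^2}\geq\nicefrac{1}{7N^2}$. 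Such an $\alpha$ exists whenever $N\geq \nicefrac{2}{3}$. If $N<\nicefrac{2}{3}$, we take $s=1$ instead: then $1-Ns\geq \nicefrac{1}{3}$, and the bound is still at least $\nicefrac{1}{7N^2}$ once $N$ is not too small. We will check this corner case separately, or assume $N\geq1$ as implicit in the statement.

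\textbf{Main obstacle.} The delicate step is the structural claim about $\rho_D$. In the general comb setting the client keeps a register $\Ksf$ and may correlate its tests arbitrarily. We must argue that the output register is nevertheless still in a product state with everything measured by $\mu_{n,k,\ell}(0)$. This follows because the round-$\ell$ input $\Rsf_C$ is fresh and the attack is a fixed local unitary in every round. Consequently the conditional output state is always $\ketbra{+}^{\otimes k-1}\otimes\ketbra{+_\alpha}$, with only the acceptance weight $p_D$ varying.
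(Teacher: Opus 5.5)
Your proof is correct and follows the paper's argument essentially verbatim. It uses the same phase attack with $\psi=\ketbra{+}^{\otimes k}$ and $U=\id$, the bounds $\varepsilon_H\ge 1-p_H$ and $\varepsilon_D\ge p_D\sin(\nicefrac{\alpha}{2})^2$, the combination with Lemma~\ref{lem:prelimVDQCExp}, and the choice $\sin(\nicefrac{\alpha}{2})=\nicefrac{2}{3N}$ giving $\nicefrac{4}{27N^2}$. Your remarks on the product structure of the output register and on small $N$ make explicit what the paper leaves tacit. However, the small-$N$ case cannot be ``checked separately'': for $N<\nicefrac{1}{\sqrt{14}}$ the claimed bound exceeds $2$, while $\varepsilon_H=\varepsilon_D=1$ always satisfies Definition~\ref{def:SASec}. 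So some lower bound on $N$ must be assumed, e.g.\ $N\ge\nicefrac{2}{3}$, which makes $\sin(\nicefrac{\alpha}{2})=\nicefrac{2}{3N}$ admissible. The paper's proof relies on the same unstated assumption.
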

\begin{proof}
    As in \thmref{theo:FidBased}, the input $\psi_C=\ketbra{+}^{\otimes k}$ and the unitary $\U=\id_{\X}$ are considered. By the same argument as for \thmref{theo:FidBased}, one finds
    \begin{align*}
        \varepsilon_H\geq 1-p_H,
    \end{align*}
    where $p_H$ in now given in \lemref{lem:prelimVDQCExp}.

    Also if the server is dishonest, the same manipulations as in \thmref{theo:FidBased} suffice and one finds
    \begin{align*}
        \varepsilon_D\geq p_D(1-\left|\braket{+}{+_{\alpha}}\right|^2),
    \end{align*}
    where $p_H$ in now given in \lemref{lem:prelimVDQCExp}. Using $\left|\braket{+}{+_{\alpha}}\right| = \cos(\nicefrac{\alpha}{2})$ and \lemref{lem:prelimVDQCExp} implies for the sum of $\varepsilon_H$ and $\varepsilon_D$
    \begin{align*}
        \varepsilon_H+\varepsilon_D\geq 1-p_H + p_D\sin(\frac{\alpha}{2})^2 \geq  \sin(\frac{\alpha}{2})^2\left(1-\left|p_H-p_D\right|\right)\geq \sin(\frac{\alpha}{2})^2\left(1-N\left|\sin(\frac{\alpha}{2})\right|\right)
    \end{align*}
    Computing the roots of first derivative after $\sin(\nicefrac{\alpha}{2})$ reveals that $\sin(\nicefrac{\alpha}{2}) = \nicefrac{2}{3N}$ maximizes the lower bounds. This implies
    \begin{align*}
        \varepsilon_H+\varepsilon_D\geq \frac{4}{27N^2} \geq \frac{1}{7N^2},
    \end{align*}
    which proves the claim.\qed
\end{proof}
\subsection{Composable security}
The proof for the trade-off with regard to composable security is also similar to its counterpart for the restricted setting. 
\begin{theorem}
    Let $\pi=\{\pi_C,\pi_S\}$ be a protocol that implements VDQC from quantum communication channels as described in \proref{protoEnt}. If
    \begin{align*}
        &\frac{1}{2}\left\|\rho_H-U(\psi)\right\|_1\leq \varepsilon_H \text{~~and }\\
        \min_{p\in[0,1]}&\frac{1}{2}\left\|\rho_D-(pU(\psi)+(1-p)\ketbra{\bot})\right\|_1\leq \varepsilon_D,
    \end{align*}
    it holds
    \begin{align*}
        \varepsilon_H+\varepsilon_D\geq \frac{1}{4N},
    \end{align*}
    where $N$ is the expected number of test rounds.
\end{theorem}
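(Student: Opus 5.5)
The argument follows the proof of \thmref{thm:compo}, with the i.i.d.\ estimate \eqref{eq:prep_end} replaced by \lemref{lem:prelimVDQCExp}. We take the input $\psi=\ketbra{+}^{\otimes k}$ and the unitary $U=\id_{\X}$, and let the server apply the rotation $\id_{\Cc^2}^{\otimes k-1}\otimes P(\alpha)$ before or after every delegated unitary.

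First we identify the client's outputs. In the output round the client delegates $U=\id$ on $\ketbra{+}^{\otimes k}$. The acceptance decision depends only on the test registers and on $\Ksf$, and these are independent of the computation register. Hence the honest output is $\rho_H=p_H\ketbra{+}^{\otimes k}+(1-p_H)\ketbra{\bot}$. Under the attack the output is $\rho_D=p_D\ketbra{+}^{\otimes k-1}\otimes\ketbra{+_\alpha}+(1-p_D)\ketbra{\bot}$, with $p_H,p_D$ as in \lemref{lem:prelimVDQCExp}. Since $\ketbra{\bot}$ is orthogonal to the output space, the correctness condition gives $\varepsilon_H\geq 1-p_H$.

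For the dishonest case we reuse the calculation from the proof of \thmref{thm:compo}. By orthogonality the trace norm splits into the block where the client accepts and the $\bot$ block, and the tensor factor $\ketbra{+}^{\otimes k-1}$ can be dropped. The triangle inequality then bounds the minimum over $p$ from below, and the choice $p=p_D$ attains the bound. Together these give
\begin{align*}
\varepsilon_D\geq \frac{p_D}{2}\left\|\ketbra{+_\alpha}-\ketbra{+}\right\|_1=p_D\left|\sin(\nicefrac{\alpha}{2})\right|.
\end{align*}
Writing $s=|\sin(\nicefrac{\alpha}{2})|\leq 1$, we combine the two bounds:
\begin{align*}
\varepsilon_H+\varepsilon_D\geq 1-p_H+p_D s\geq s\left(1-p_H+p_D\right)\geq s\left(1-|p_H-p_D|\right).
\end{align*}

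Now \lemref{lem:prelimVDQCExp} gives $|p_H-p_D|\leq Ns$, so $\varepsilon_H+\varepsilon_D\geq s(1-Ns)$. The right-hand side is maximised at $s=\nicefrac{1}{2N}$, which gives the value $\nicefrac{1}{4N}$. The attacker can pick $\alpha$ to realise this $s$, since $N\geq 1/2$ implies $\nicefrac{1}{2N}\leq 1$. If $N<\nicefrac{1}{2}$, we take $s=1$ instead, and then $1-Ns>\nicefrac{1}{2}$, so the bound still holds.

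The only subtle step is checking that the rotation acts solely on the computation register in the output round. The client's acceptance measurement, including on $\Ksf$, is the same comb-based test as in \lemref{lem:prelimVDQCExp}, so $\rho_D$ really has the product form above. We also need that $\ell$ being drawn from a general $\omega_n$ does not change this block structure; it does not, because $\ell$ only enters through the averaged acceptance probabilities $p_H$ and $p_D$.
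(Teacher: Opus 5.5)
For $N\ge 1/2$ your argument is the paper's proof. It uses the same input $\ketbra{+}^{\otimes k}$, $U=\id$, and phase attack, the same bounds $\varepsilon_H\geq 1-p_H$ and $\varepsilon_D\geq p_D\left|\sin(\alpha/2)\right|$, combines them with \lemref{lem:prelimVDQCExp}, and optimises at $s=1/(2N)$.

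The one step you add, the case $N<1/2$, is wrong.

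\begin{itemize}
\item For $N<1/2$ the target $1/(4N)$ exceeds $1/2$, so showing $1-Ns>1/2$ does not prove it.
\item With $s\le 1$ the most this method gives is $\max_{s\le 1}s(1-Ns)=1-N$. This is strictly below $1/(4N)$ for every $N\neq 1/2$, because $4N(1-N)=1-(1-2N)^2\le 1$.
\item No other argument can fix it. Trace distances between states are at most $1$, so the hypotheses always hold with $\varepsilon_H=\varepsilon_D=1$. The claimed inequality $\varepsilon_H+\varepsilon_D\geq 1/(4N)$ is therefore false whenever $N<1/8$.
\end{itemize}

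The theorem has to be read with the implicit assumption $N\ge 1/2$. The paper's choice $\sin(\alpha/2)=1/(2N)$ silently needs the same assumption. You should state it explicitly instead of claiming the small-$N$ case.
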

\begin{proof}
    Again, the input state is $\psi_C=\ketbra{+}^{\otimes k}$ and the delegated unitary is $\U=\id_{\Cc^2}^{\otimes k}$. Similarly as in \thmref{thm:compo}, we find
    \begin{align*}
        \varepsilon_H\geq 1-p_H,
    \end{align*}
    where $p_H$ defined as in \lemref{lem:prelimVDQCExp}.

    If, however, the server is dishonest, by the same argument as in \thmref{thm:compo}, we find
    \begin{align*}
        \varepsilon_D\geq \frac{p_D}{2}\left\|\ketbra{+}^{\otimes k-1}\otimes \ketbra{+_{\alpha}}-\ketbra{+}^{\otimes k}\right\|_1,
    \end{align*}
    where $p_D$ is given as in \lemref{lem:prelimVDQCExp}. Using the expression for the trace distance of pure state, the above becomes
    \begin{align*}
        \varepsilon_D\geq p_D\sqrt{1-\left|\braket{+}{+_{\alpha}}\right|^2} = p_D\sqrt{1-\cos(\frac{\alpha}{2})^2} = p_D \left|\sin(\frac{\alpha}{2})\right|.
    \end{align*}
    Adding $\varepsilon_H$ and $\varepsilon_D$ yields in combination with \lemref{lem:prelimVDQCExp}
    \begin{align*}
        \varepsilon_H + \varepsilon_D &\geq 1-p_H + p_D \left|\sin(\frac{\alpha}{2})\right| \geq \left|\sin(\frac{\alpha}{2})\right|\left(1-\left|p_H-p_D\right|\right) 
        \\&\geq \left|\sin(\frac{\alpha}{2})\right|\left(1-N\left|\sin(\frac{\alpha}{2})\right|\right).
    \end{align*}
    The choice $\sin(\nicefrac{\alpha}{2})=\nicefrac{1}{2N}$ maximizes the lower bound and yields $\varepsilon_H + \varepsilon_D\geq \nicefrac{1}{4N}$.\qed
\end{proof}
\end{document}